\documentclass{article}

 \usepackage[preprint]{neurips_2025}

\usepackage[utf8]{inputenc} 
\usepackage[T1]{fontenc}    
\usepackage{hyperref}       
\usepackage{url}            
\usepackage{booktabs}       
\usepackage{amsfonts}       
\usepackage{nicefrac}       
\usepackage{microtype}      
\usepackage{xcolor}         
\usepackage{enumitem}
\usepackage{amsthm}
\usepackage{amsmath}
\usepackage{graphicx}
\usepackage{subcaption}
\usepackage{multirow} 
\usepackage{adjustbox} 
\newtheorem{theorem}{Theorem}[section]        
\newtheorem{corollary}[theorem]{Corollary}
\newtheorem{proposition}[theorem]{Proposition}

\title{Network Topology Does Not Matter: The Breakdown of Classical Consensus in LLM Populations}

\author{%
  Sahil Rajesh Dhayalkar \\
  Arizona State University\\
  \texttt{sdhayalk@asu.edu} \\
}

\begin{document}
\theoremstyle{definition}
\newtheorem{definition}[theorem]{Definition}
\newtheorem{remark}[theorem]{Remark}

\maketitle


\begin{abstract}
As multi-agent language model networks become widely deployed, predicting their collective behavior is a critical challenge. Classical distributed consensus theory relies on network topology to predict convergence, structurally assuming that agent updates are convex combinations, population means are conserved, and weights are content-independent. In this work, we demonstrate that these foundational assumptions fail for large language models (LLMs). By deterministically extracting continuous credence from token-level probabilities across eight instruction-tuned models and 73,080 controlled exchanges, we directly measure the empirical LLM update operator. We find that LLMs systematically violate convexity by updating outside the hull of observed opinions, exhibiting active repulsion (negative weights) that fundamentally breaks row-stochastic modeling and standard bounded-confidence repairs. Scaling these empirically fitted operators to simulations of 10,000 agents reveals that algebraic connectivity explains almost none of the final outcome variance. Instead, population dynamics are governed entirely by the model's internal update bias and message framing. Consequently, populations of several models, including those with up to 7B parameters, reliably compound small local drifts to reach confident agreement on objectively false claims regardless of their network structure. Our results prove that classical graph-theoretic metrics are inadequate for LLM populations, highlighting a profound deployment risk where multi-agent deliberation predictably saturates in collective delusion.
\end{abstract}

\section{Introduction}
\label{SEC:INTRODUCTION}

Systems connecting multiple language model agents are increasingly common, raising the critical question of whether we can predict a group's final consensus. Distributed consensus theory in network science offers a standard framework, suggesting that if agents repeatedly average their neighbors' positions, the group converges at a rate determined by the graph's algebraic connectivity ($\lambda_2$) \cite{Degroot01031974,jadbabaie2003,olfatisaber2004,xiao2004}. This makes network topology the primary focus for system designers.

However, this classical theory relies on three structural assumptions (also mentioned formally in~\ref{SEC:PRELIC}): updates are weighted averages, population means are preserved, and weights depend strictly on the graph rather than message content. Human populations often violate these assumptions, prompting non-linear alternatives in opinion dynamics \cite{friedkin1990,hegselmann2002,deffuant2000}. We empirically test if language models similarly deviate by measuring the update operator directly rather than assuming its form. By tracking a model's revised position across a grid of prior states, peer states, and message framings, we inspect the underlying rule and scale it to populations of $10^4$ agents.

Our findings show all three classical assumptions fail substantially. Notably, some updates carry a negative effective weight where agents move away from peers, a behavior impossible to represent with non-negative row-stochastic matrices. The standard bounded confidence repair does not fix this, because it still keeps every update inside the range of the opinions being averaged. At scale, $\lambda_2$ explains almost none of the outcome variance. Instead, each model's own drift and the message wording drive the final state, and populations of four of the eight models we test, including a 7B model, converge to confident belief in false claims regardless of topology. 

\subsection{Contributions.}
The contributions of our work are as follows:
\begin{itemize}[itemsep=0pt,topsep=0pt]
\item We outline a measurement procedure for an LLM (Large Language Model) agent's update operator that is deterministic and free of parsing failures by reading credence from token-level probabilities.
\item We formalize the three assumptions of classical consensus as separately testable conditions to clarify what is lost when they fail.
\item We measure these assumptions across eight instruction-tuned models and 73,080 controlled exchanges, finding that all three fail.
\item We validate the measured operator against real agent populations, showing classical averaging sends the group to the wrong destination despite reasonable one-step predictions.
\item We scale the simulation to $10^4$ agents to prove algebraic connectivity fails to explain the outcome, with populations of four of the eight models, including a 7B model, confidently adopting false claims.
\end{itemize}

\section{Related Work}
\label{SEC:RELATED-WORK}

\paragraph{Classical opinion dynamics and network consensus.}
The mathematical study of opinion dynamics originates in distributed control theory and network science \cite{Degroot01031974, jadbabaie2003, olfatisaber2004, xiao2004}. Foundational linear models define consensus as a process where individuals repeatedly adopt the weighted average of their neighbors' beliefs \cite{Degroot01031974, xiao2004}. In these systems network topology serves as the primary predictor of convergence speed \cite{olfatisaber2004, jadbabaie2003}. Researchers subsequently developed non-linear bounded confidence models to better capture human behavior by stipulating that individuals ignore opinions that diverge too far from their own \cite{friedkin1990, hegselmann2002, deffuant2000}. While these classical models effectively describe engineered systems and provide theoretical baselines they rely on strict structural assumptions regarding convexity and conservation that do not naturally map to modern artificial intelligence agents \cite{hegselmann2002, deffuant2000}.

\paragraph{Multi-agent debate and collaboration.}
The deployment of large language models as autonomous agents has popularized multi-agent debate frameworks \cite{du2023improving, li2023camel, liang2023encouraging}. In these setups multiple agents interact and challenge one another to refine reasoning and reach a shared decision \cite{chan2023chateval}. Researchers have built various cooperative environments where language models collaborate to solve complex reasoning tasks and evaluate text generation \cite{zhuge2023mindstorms, talebirad2023multi}. While intended to improve accuracy these consensus oriented debate protocols often face significant challenges regarding premature agreement and hallucination propagation \cite{du2023improving}. Agents frequently exhibit a tendency to adopt the majority opinion and may alter their initial correct answers when faced with dissenting peers \cite{liang2023encouraging, chan2023chateval}.

\paragraph{Conformity and social influence in language models.}
Recent studies show that exposure to peer stances drastically increases the rate at which agents change their answers and exhibit conformity \cite{perez2022discovering, park2023generative}. This behavior often mimics sycophancy rather than genuine reasoning and it demonstrates that agents manifest human-like social collaboration mechanisms alongside unique failure modes \cite{perez2022discovering}. Language models exhibit unnaturally strong opinion convergence and a systematic belief drift that frequently gravitates toward user preferences or dominant context \cite{argyle2023out, motoki2024more}. Furthermore researchers have found that language models struggle to faithfully simulate human belief updates in controlled settings as they often collapse within group variance and produce ideologically stereotyped responses \cite{argyle2023out}.

\paragraph{Simulating populations and network structures.}
A growing body of literature evaluates how well language models simulate human opinion dynamics at scale \cite{park2023generative, argyle2023out}. Studies comparing centralized and distributed communication structures highlight that network design dictates the speed and reliability of consensus in artificial populations \cite{li2023camel, talebirad2023multi}. While dense distributed networks achieve faster agreement they also amplify errors and trigger confident but incorrect cascades among agents \cite{du2023improving, zhuge2023mindstorms}. Our work bridges these modern observations with classical distributed consensus theory \cite{Degroot01031974, olfatisaber2004}. By directly extracting the continuous empirical update operator we demonstrate that structural failures render classical algebraic connectivity completely inadequate for predicting the final destination of large scale agent populations.

\section{Preliminaries}
\label{SEC:PRELIC}

\paragraph{Classical linear consensus.}
Let $\mathcal{G}=(\mathcal{V},\mathcal{E})$ be a graph of $n$ agents with scalar states $x_i(t)\in[0,1]$ at discrete time $t$, where $\mathcal{V}$ represents the set of agents (vertices) and $\mathcal{E}$ represents the edges defining their interactions. The classical model \cite{Degroot01031974,olfatisaber2004} updates the joint state via 
\begin{equation}
\begin{aligned}
x(t+1) &= W\,x(t), \qquad W \text{ row stochastic},\\
W_{ij} &> 0 \iff (i,j)\in\mathcal{E}.
\end{aligned}
\label{eq:linear}
\end{equation}

Here, $W$ represents the row stochastic weight matrix defining the interaction weights between agents. For strongly connected, aperiodic graphs, $x(t)$ converges to a shared value at a rate determined by the graph Laplacian's algebraic connectivity ($\lambda_2$), making topology the central focus of prediction.

\paragraph{Choosing a state variable.}
To evaluate LLM agents against Equation \eqref{eq:linear}, we define $x_i$ as an agent's \emph{credence} (stated subjective probability) in a fixed binary proposition $P$. This provides a verifiable $[0,1]$ scalar that mirrors real-world factual consensus building.

\paragraph{Reading the state without sampling.}
To eliminate parsing errors and sampling noise, we compute credence continuously and deterministically as the expectation of the model's token distribution over a two-digit answer:
\begin{equation}
  x_i \;=\; \frac{1}{99}\sum_{d_1=0}^{9}\sum_{d_2=0}^{9}
            (10 d_1 + d_2)\; p(d_1 \mid \text{ctx})\; p(d_2 \mid \text{ctx}, d_1).
  \label{eq:readout}
\end{equation}
Here, $d_1$ and $d_2$ denote the first and second digits (ranging from $0$ to $9$) of the model's two-digit answer. The conditional probability for the second digit, $p(d_2 \mid \text{ctx}, d_1)$, is obtained via a single batched forward pass over the ten sequences formed by appending possible first digit to the context. Llama and Mistral split numbers into tokens differently, so for them we compute the same expectation in a slightly different way (Appendix~\ref{app:readout}).

\paragraph{The three load bearing assumptions.}
These structural assumptions form the mathematical foundation of classical linear consensus models, originating from DeGroot's seminal work on opinion formation and subsequent literature in distributed control theory \cite{Degroot01031974,jadbabaie2003,olfatisaber2004,xiao2004}. They enforce the strict row-stochastic matrix properties required to guarantee that a network of agents will reliably converge over a graph. Consequently, Equation \eqref{eq:linear} strictly relies on these three conditions:
\begin{description}[itemsep=0pt,topsep=0pt]
\item[\textnormal{\textbf{(A1) Convexity.}}] Updates are convex combinations of visible states ($x_i(t{+}1)\in\mathrm{conv}\{x_j(t): j\in\mathcal{N}_i\cup\{i\}\}$, where $\mathcal{N}_i$ represents the set of neighbors of agent $i$ and $\mathrm{conv}$ denotes the convex hull). This guarantees the population's diameter strictly shrinks toward agreement.
\item[\textnormal{\textbf{(A2) Conservation.}}] For doubly stochastic $W$, the population mean is invariant, anchoring the final consensus directly to the initial condition.
\item[\textnormal{\textbf{(A3) Content independence.}}] Weights $W_{ij}$ depend solely on the graph's wiring, completely ignoring the numerical value or framing of the communicated message.
\end{description}

\paragraph{The empirical update operator and realized weight.}
\label{update-operator-and-realized-weight}
Instead of assuming Equation \eqref{eq:linear}, we empirically measure the update operator $f:\;(x_i,\,x_j,\,c_j)\;\longmapsto\;x_i'$ (where $x_i'$ represents the revised position of agent $i$) across a grid of prior positions $x_i$, peer positions $x_j$, and message framings $c_j$. We connect this back to classical theory using the realized weight:
\begin{equation}
  \lambda(x_i,x_j,c_j) \;=\; \frac{x_i' - x_i}{x_j - x_i},
  \qquad x_j \neq x_i .
  \label{eq:lambda}
\end{equation}

This realized weight quantifies the peer influence dynamically. A value of $\lambda = 1$ means the agent completely adopts the peer's opinion, while $\lambda = 0$ means the agent completely ignores the peer. The interval $\lambda \in (0, 1)$ indicates that normal averaging occurs, which is exactly what classical theory requires. Finally, $\lambda < 0$ means the agent repels and actively moves away from the peer's position.

Classical theory strictly requires $\lambda \in [0,1]$ (A1), constant pair means in symmetric exchanges (A2), and constant $\lambda$ across framings (A3). Crucially, $\lambda < 0$ implies an action impossible to represent with any non-negative row stochastic matrix.

\section{Where Consensus Theory Breaks}
\label{sec:break}

This section works through the three assumptions in turn. For each one we state
what the classical theory obtains from it, what is lost when it fails, and what
the failure looks like as a measurable quantity. The measurements themselves are
reported in Section \ref{sec:exp}.

\subsection{(A1) Convexity, and why contraction is lost}

The reason convexity matters is that it supplies a Lyapunov function. In traditional network math, an agent's updated opinion is always a weighted average of the opinions it currently sees. Because an average can never be higher than the highest number or lower than the lowest number, the total gap between the most extreme opinions in the group, the diameter, can only shrink over time. Define this diameter as:
\begin{equation}
  D(t) \;=\; \max_i x_i(t) \;-\; \min_i x_i(t).
\end{equation}
where $t$ represents the discrete time step and $x_i(t)$ represents the credence state of agent $i$ at that specific time. Under (A1), $D(t{+}1)\le D(t)$. As long as the network is connected, this shrinking diameter guarantees the group will eventually contract toward total agreement. This single fact is what makes the classical theory so clean.

\begin{proposition}[Hull escape destroys monotone contraction]
\label{PROP:ONE}
Suppose an update may land outside the convex hull of the states the agent can
see by at most $\delta>0$. Then $D(t{+}1)\le D(t)+2\delta$, and the bound is
attained. Consequently $D$ is no longer non-increasing, and connectivity alone
no longer implies contraction toward agreement.
\end{proposition}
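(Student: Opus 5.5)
The plan has two parts: an upper bound that only compares each agent with the global extremes, and an explicit construction showing the bound is attained.

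\textbf{Upper bound.} Fix $t$ and write $m(t)=\min_j x_j(t)$ and $M(t)=\max_j x_j(t)$, so that $D(t)=M(t)-m(t)$. For each agent $i$, the visible set $\{x_j(t): j\in\mathcal{N}_i\cup\{i\}\}$ is a subset of all states. Its convex hull is therefore an interval $[a_i,b_i]$ with $m(t)\le a_i\le b_i\le M(t)$. The hypothesis says $x_i(t{+}1)$ is within distance $\delta$ of this interval, so
\[
m(t)-\delta\le a_i-\delta\le x_i(t{+}1)\le b_i+\delta\le M(t)+\delta .
\]
Taking the maximum over $i$ gives $\max_i x_i(t{+}1)\le M(t)+\delta$, and taking the minimum gives $\min_i x_i(t{+}1)\ge m(t)-\delta$. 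Subtracting yields $D(t{+}1)\le D(t)+2\delta$. Nothing here uses graph structure, only that each local hull sits inside the global range.

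\textbf{Attainment.} I would build a small explicit example. Take $n=2$ agents joined by an edge, with $x_1(t)=1/2-D/2$ and $x_2(t)=1/2+D/2$, where $D+2\delta\le 1$ so that everything stays in $[0,1]$. Each agent sees both states, so the hull is $[x_1,x_2]$. Let agent 1 overshoot downward, $x_1(t{+}1)=x_1(t)-\delta$, and agent 2 upward, $x_2(t{+}1)=x_2(t)+\delta$. In the notation of Equation~\eqref{eq:lambda}, both agents have realized weight $\lambda=-\delta/D<0$, which is exactly the repulsion discussed there. Then $D(t{+}1)=D(t)+2\delta$. The state-space boundary is the only technical point to watch, and the constraint $D+2\delta\le 1$ handles it; I expect no other obstacle.

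\textbf{Consequences.} Since $D(t{+}1)>D(t)$ is possible in this example, $D$ is not non-increasing, so it can no longer serve as a Lyapunov function. To show that connectivity alone no longer implies contraction, I would iterate the construction. Choose a sequence $\delta_t>0$, for example a constant $\delta$ as long as the states remain inside $[0,1]$. Under repeated repulsion the diameter then stays bounded away from zero (indeed it grows) even though the graph is a connected, aperiodic pair. This contradicts the classical convergence conclusion recalled after Equation~\eqref{eq:linear}.
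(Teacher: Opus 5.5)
Your proof is correct and follows the paper's argument. The upper bound is derived exactly as in the appendix: place each local hull inside the global range $[\min_k x_k(t),\max_k x_k(t)]$ and widen both ends by $\delta$. Your explicit two-agent construction, with the check $D+2\delta\le 1$, is a concrete version of the paper's one-line attainment remark, and it makes explicit what that remark leaves tacit: the overshooting agents must see the global maximum and minimum, and the states must stay in $[0,1]$.
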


\noindent \textit{Proof Sketch.} The complete proof sketch is provided in Appendix~\ref{PROOF:PROP:ONE}.

In practice, language models do not strictly average what they hear. They sometimes adopt positions that are more extreme than any of the opinions presented to them. When an agent lands outside the bounds of the existing opinions (escaping the convex hull), the mathematical maximum and minimum bounds stretch. Because this allows the diameter of the group's opinions to grow rather than shrink, simply wiring agents together in a connected network no longer guarantees they will ever reach a consensus. The failure has an even stronger form when the realized weight of Equation \eqref{eq:lambda} is negative.

\begin{corollary}[Negative weights are unrepresentable]
\label{COR:NEG}
If $\lambda(x_i,x_j,c_j)<0$ for some reachable configuration, then there is no
non-negative row stochastic $W$ (the matrix defining the network weights) that reproduces that update, for any choice of weights.
\end{corollary}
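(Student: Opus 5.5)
The plan is to argue by contradiction, reducing the claim to the convex-hull property (A1) that every non-negative row-stochastic $W$ enforces. First I will pin down what ``reproduces that update'' means. The realized weight in Equation~\eqref{eq:lambda} is measured in a pairwise exchange: agent $i$ sees only its own state $x_i$ and one peer state $x_j\neq x_i$, and moves to $x_i'$. A row-stochastic $W$ reproduces this update if the $i$-th row gives
\[
x_i' \;=\; W_{ii}\,x_i + W_{ij}\,x_j, \qquad W_{ii},W_{ij}\ge 0,\quad W_{ii}+W_{ij}=1,
\]
with all other entries of row $i$ zero, since no other states are visible.

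Next, suppose such a $W$ exists and write $w:=W_{ij}$, so that $W_{ii}=1-w$. Subtracting $x_i$ from both sides gives $x_i'-x_i = w\,(x_j-x_i)$. Dividing by $x_j-x_i\neq 0$ yields $\lambda(x_i,x_j,c_j)=w\ge 0$. This contradicts the hypothesis $\lambda<0$, so no admissible $W$ exists, whatever weights are chosen. Geometrically, $\lambda<0$ places $x_i'$ on the opposite side of $x_i$ from $x_j$. The point therefore lies strictly outside $\mathrm{conv}\{x_i,x_j\}$, which is exactly an instance of the hull escape treated in Proposition~\ref{PROP:ONE}.

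The main obstacle is the modelling question of whether $W$ may also place weight on states other than $x_i$ and $x_j$, for instance on other neighbours in a larger network. I will handle this in two steps. First, I will restrict to the controlled pairwise configuration in which only $x_i$ and $x_j$ are visible; this is the setting in which $\lambda$ is defined and measured. Second, I will add a remark on the general case. Any non-negative row-stochastic update satisfies $x_i'\in[\min_k x_k,\max_k x_k]$ over the visible states. So whenever the update overshoots the whole visible hull, no $W$ can reproduce it. This is the stronger statement that the corollary's ``for any choice of weights'' is meant to capture.

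I also want to stress that the argument does not depend on the content-dependence issue (A3). Even if $W$ were allowed to vary with $c_j$, each realized row would still have to be non-negative and sum to one. The contradiction therefore holds pointwise, configuration by configuration.
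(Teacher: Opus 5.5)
Your proof is correct and is essentially the paper's argument: restrict to the pairwise exchange, substitute $W_{ii}=1-W_{ij}$, and solve to get $W_{ij}=\lambda$, so $\lambda<0$ would require a forbidden negative entry. Your extra remark on larger neighbourhoods is also sound, since it correctly requires the update to overshoot the whole visible hull rather than merely having $\lambda<0$ relative to a single peer.
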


\noindent \textit{Proof Sketch.} The complete proof sketch is provided in Appendix~\ref{PROOF:COR:NEG}.

The classical theory breaks completely when an agent hears a peer's opinion and actively repels, moving further away from that peer's stance (a realized weight of $\lambda < 0$). The classical linear model relies on matrices that strictly require non-negative numbers. A large but positive $\lambda$ would mean the linear model simply has the wrong parameters (a fitting problem). However, a negative $\lambda$ cannot be fixed by adjusting the parameters of the classical model; it proves that the foundational mathematical structure itself is fundamentally incapable of representing how AI agents actually interact.

\subsection{(A2) Conservation, and why the destination moves}

In classical consensus theory, when $W$ is doubly stochastic, the overall average opinion of the entire population remains mathematically constant during discussions. Because this average is perfectly preserved over time, the group's final agreed-upon consensus is directly anchored to whatever their starting beliefs were. We now show what happens when each exchange nudges the mean instead of preserving it.

\begin{proposition}[Operator bias relocates the fixed point]
\label{PROP:TWO}
Let $m(t)$ be the population mean and suppose each exchange shifts it by an
expected amount $b$, so that $m(t{+}1)=m(t)+b$ while $m(t)$ remains interior to
$[0,1]$. Then $m(T)=\min\{1,\,m(0)+Tb\}$ for $b>0$, where $T$ is the final time horizon. For any $b>0$ and any
$m(0)$, there is a horizon $T$ beyond which the population sits arbitrarily
close to $1$, independently of $m(0)$.
\end{proposition}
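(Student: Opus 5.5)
The plan is to treat the statement as a clipped linear recursion and argue by induction on $t$. The one modelling choice I would make explicit is how the hypothesis ``while $m(t)$ remains interior'' is to be read. I take the dynamics to be $m(t+1)=\min\{1,\,m(t)+b\}$. The additive rule $m(t+1)=m(t)+b$ holds whenever $m(t)+b<1$, and the state is absorbed at the upper boundary of $[0,1]$ once it would cross it. This is the only reading under which credences stay in $[0,1]$ and the claimed closed form makes sense. I would state this convention at the start of the proof.

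First I would prove the closed form $m(T)=\min\{1,\,m(0)+Tb\}$ by induction on $T$. The base case $T=0$ is immediate, since $m(0)\le 1$. For the inductive step there are two cases.
\begin{itemize}
\item If $m(0)+Tb<1$, then $m(T)=m(0)+Tb$ is interior, so the additive rule applies and $m(T+1)=m(0)+(T+1)b$, clipped at $1$ if this now exceeds $1$.
\item If $m(0)+Tb\ge 1$, then $m(T)=1$ is absorbing and stays $1$, which again equals $\min\{1,\,m(0)+(T+1)b\}$.
\end{itemize}
Both cases reduce to the identity $\min\{1,\min\{1,a\}+b\}=\min\{1,a+b\}$ for $b>0$, which is a two-line check.

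Second, for the asymptotic claim, fix $\varepsilon>0$ and take the uniform horizon $T^\ast=\lceil 1/b\rceil$. Since $m(0)\ge 0$, every $T\ge T^\ast$ gives $m(0)+Tb\ge 1$, so $m(T)=1$, and in particular $1-m(T)<\varepsilon$. Because $T^\ast$ depends only on $b$, the conclusion holds uniformly in $m(0)$, which is exactly the ``independently of $m(0)$'' part of the statement.

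I would add a remark on the stochastic version, where $b$ is only an expected shift: the same argument applies to $\mathbb{E}[m(t)]$ by linearity of expectation, up to the clipping nonlinearity. I would not claim more than this.

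The main obstacle is not computational. It is making the boundary behavior rigorous, because taken literally the hypothesis ``$m(t+1)=m(t)+b$ while interior'' does not specify what happens at the boundary. The absorbing-at-$1$ convention has to be justified rather than assumed. I would try to justify it through the readout: by Equation~\eqref{eq:readout}, credences lie in $[0,1]$, so the mean cannot exceed $1$. A positive-bias operator, meaning $f(x_i,x_j,c_j)\ge x_i$ in the relevant regime, therefore keeps the mean saturated once it reaches $1$. If that justification is too weak, I would fall back to proving only $m(T)\ge\min\{1,\,m(0)+Tb\}$. That inequality already gives the conclusion that $m(T)$ is arbitrarily close to $1$.
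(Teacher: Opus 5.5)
Your proposal is correct and follows essentially the paper's route: solve the additive recurrence by induction, cap the mean at $1$ because credences are bounded, and read off a finite saturation horizon. Your version is slightly tighter in two places. You build the clipping into the recursion $m(t+1)=\min\{1,m(t)+b\}$, whereas the paper solves the unclipped recurrence and only afterwards ``applies this absolute upper bound''. Your horizon $\lceil 1/b\rceil$ is uniform in $m(0)$, while the paper's $(1-m(0))/b$ depends on $m(0)$ but is bounded by $1/b$ anyway.
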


\noindent \textit{Proof Sketch.} The complete proof sketch is provided in Appendix~\ref{PROOF:PROP:TWO}.

Language models do not preserve this mathematical average; instead, each pairwise exchange introduces a small expected bias that slightly nudges the group's overall mean. As agents continue to interact, this slight bias accumulates and compounds, eventually pushing the population to the extreme boundary of maximum certainty. Consequently, the group's final destination is determined entirely by the AI's internal interaction bias, rendering the initial starting opinions irrelevant except for determining how long the process takes.

\begin{corollary}[Confident false consensus]
\label{COR:FALSE}
If $b>0$ on a proposition that is false, the population converges to high
credence in that proposition, and no choice of initial condition prevents it.
\end{corollary}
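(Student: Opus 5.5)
The corollary is a direct specialization of Proposition~\ref{PROP:TWO}, so I will argue by instantiation rather than reprove the drift argument. Fix a false proposition $P$ and read $x_i(t)$ as agent $i$'s credence in $P$, exactly as in the state variable chosen in Section~\ref{SEC:PRELIC}. The hypothesis $b>0$ means that each exchange moves the population mean $m(t)$ of credence in $P$ upward by $b$ in expectation.

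I will then apply Proposition~\ref{PROP:TWO} verbatim. While $m(t)$ stays interior to $[0,1]$, it gives $m(T)=\min\{1,m(0)+Tb\}$. Given any tolerance $\varepsilon>0$ and any initial condition $x(0)\in[0,1]^n$, choose
\[
T^\ast=\left\lceil \frac{1-\varepsilon-m(0)}{b}\right\rceil .
\]
Because $m(0)\ge 0$, this horizon is bounded above by the uniform value $\lceil (1-\varepsilon)/b\rceil$. Hence $m(T)\ge 1-\varepsilon$ for all $T\ge T^\ast$, whatever $x(0)$ is. This uniform bound is the formal content of ``no choice of initial condition prevents it'': the initial state affects only the time to reach the boundary, never the destination.

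The one step that needs care is passing from a high mean to a population that is confidently wrong. High $m$ alone says nothing about individual agents, so I will use the fact that every $x_i\le 1$. By Markov's inequality applied to $1-x_i\ge 0$, the fraction of agents with $x_i<1-\sqrt{\varepsilon}$ is at most
\[
\frac{1-m}{\sqrt{\varepsilon}}\le\sqrt{\varepsilon}.
\]
So all but a $\sqrt{\varepsilon}$ fraction of agents hold credence at least $1-\sqrt{\varepsilon}$ in $P$. Since $P$ is false, this is confident consensus on a falsehood.

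The main obstacle is the interior assumption in Proposition~\ref{PROP:TWO}. Near $1$ the drift $b$ cannot literally persist, because the mean cannot exceed $1$. I will handle this by reading the claim as convergence into an arbitrarily small neighbourhood of $1$: the drift acts only up to $1-\varepsilon$, which is all the argument needs. I will also note that ``expected amount $b$'' makes the conclusion hold in expectation. If an almost-sure version is wanted, I would add a martingale concentration step, for example Azuma's inequality, using the bounded increments of $m$.
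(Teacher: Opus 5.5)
Your proposal is correct and follows the paper's proof. Both arguments instantiate Proposition~\ref{PROP:TWO} on a false proposition and observe that $m(0)$ only changes how long $m(T)=\min\{1,m(0)+Tb\}$ takes to saturate. Your uniform horizon $\lceil(1-\varepsilon)/b\rceil$ and your Markov step, which passes from a high population mean to most individual agents holding high credence, make precise two claims the paper leaves informal: ``no choice of initial condition prevents it'' and ``the population converges to high credence.''
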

\noindent \textit{Proof Sketch.} The complete proof sketch is provided in Appendix~\ref{PROOF:COR:FALSE}.

If this persistent upward bias applies to a factual claim that is objectively false, the population will inexorably drift toward high confidence in that falsehood. Because this upward drift represents blind assent rather than genuine evidence gathering, the network will trap itself in a state of collective delusion. This macro-scale failure aligns with broader architectural findings that language models frequently rely on internal shortcuts and exhibit explanation drift rather than maintaining stable, robust evidence tracking across contexts \cite{dhayalkar2026reasoningstabilizationpointtrainingtime}. No amount of initial skepticism programmed into the agents can prevent them from eventually talking each other into confidently believing the false claim. Corollary \ref{COR:FALSE} is what makes the measurement of $b$ worth doing carefully: if $b$ is positive and roughly the same on true and false claims, it proves the drift is pure assent rather than evidence gathering.

\subsection{(A3) Content independence, and why the graph is not enough}

The third assumption is what licenses the whole practice of reasoning about multi agent systems by drawing a graph. If $W_{ij}$ (the mathematical weight assigned by agent $i$ to agent $j$) depends only on the edge, then the graph determines the dynamics.

\begin{proposition}[Content dependence underdetermines the graph]
\label{PROP:THREE}
If $\lambda$ (the realized weight) depends on the framing $c_j$, then $W=W(c)$ (meaning the weight matrix becomes a direct function of the message framing), and two systems with identical graphs and identical initial conditions can follow different trajectories and reach different fixed points. No function of $\mathcal{G}$ (the network graph) alone predicts the outcome.
\end{proposition}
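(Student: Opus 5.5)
The plan is to prove Proposition~\ref{PROP:THREE} by explicit construction. The claim is existential: identical graphs and initial conditions may still yield different outcomes. So it suffices to exhibit one graph $\mathcal{G}$, one initial state $x(0)$, and two framing assignments $c \neq \tilde c$ whose trajectories separate. The statement that ``no function of $\mathcal{G}$ alone predicts the outcome'' then follows formally. Any map $F(\mathcal{G})$ returns a single fixed point for this graph and initial condition, so it must be wrong for at least one of the two systems.

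First, I will formalize the hypothesis. Content dependence means there exist framings $c, \tilde c$ and some reachable configuration $(x_i, x_j)$ with $x_j \neq x_i$ such that $\lambda(x_i,x_j,c) \neq \lambda(x_i,x_j,\tilde c)$. Substituting into Equation~\eqref{eq:lambda}, the update is $x_i' = x_i + \lambda(x_i,x_j,c_j)(x_j - x_i)$. Hence the effective matrix entries $W_{ij}(c) = \lambda$ and $W_{ii}(c) = 1-\lambda$ differ between the two framings. This gives $W = W(c)$ directly.

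Second, I will build the simplest witness: two agents on a single edge, with states $x_1(0)=x_i$ and $x_2(0)=x_j$. System one uses framing $c$ throughout and system two uses $\tilde c$. After one step, $x_1(1) - \tilde x_1(1) = (\lambda - \tilde\lambda)(x_j - x_i) \neq 0$, so the trajectories already differ.

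To obtain different fixed points, not merely different transients, I will restrict to the regime where $\lambda$ is constant in $(x_i,x_j)$ for each framing. For symmetric weights $\lambda_1, \lambda_2 \in (0,1)$ the limit is the weighted average
\[
x^\ast = \frac{\lambda_2 x_1(0) + \lambda_1 x_2(0)}{\lambda_1+\lambda_2},
\]
and this value depends on the ratio $\lambda_1/\lambda_2$. I will choose framings so that the ratios differ: for example, agent~1 hears framing $c$ and agent~2 hears $\tilde c$, versus both hearing $c$. With $x_1(0) \neq x_2(0)$, the two limits differ.

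If the hypothesis only guarantees that $\lambda$ varies at isolated points, I will instead appeal to Proposition~\ref{PROP:TWO}. Framing-dependent drift $b(c)$ moves the mean at different rates, giving different values of $m(T)$ at any fixed horizon $T$.

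The main obstacle is that the hypothesis says $\lambda$ depends on $c_j$ only at \emph{some} configuration, and that configuration must actually be reached from a common initial condition in a way that the difference persists to the fixed point. I will handle this by choosing the initial condition to be exactly that configuration, which makes it reachable at $t=0$. For fixed points, I will either assume the constant-$\lambda$ form used in the measurements, or weaken the conclusion to ``different trajectories,'' which already suffices for the non-predictability claim about $\mathcal{G}$.
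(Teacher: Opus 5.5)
Your proof and the paper's share the same skeleton. Both identify $W_{ij}=\lambda$ from the pairwise update, fix a common graph and $x(0)$, note that the first-step states differ, and conclude that any map $F(\mathcal{G})$ is wrong for one of the two systems. The paper does this abstractly, for an arbitrary graph with two uniform conventions $c_A\neq c_B$ and a ``non-trivial'' $x(0)$. It then simply asserts that divergence at $t=1$ means the systems ``cascade toward completely different fixed points.''

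Your version is narrower but more careful at exactly that step. Choosing $x(0)$ to be the witnessing configuration makes $x_1(1)-\tilde x_1(1)=(\lambda-\tilde\lambda)(x_j-x_i)\neq0$ automatic. The explicit limit $\bigl(\lambda_2x_1(0)+\lambda_1x_2(0)\bigr)/(\lambda_1+\lambda_2)$ then shows why your mixed assignment is needed. If both agents switch from $c$ to $\tilde c$ under constant $\lambda$, the pair's $W$ stays doubly stochastic and both systems converge to the same average. So step-one divergence alone does not give different fixed points, and the paper's inference is not valid in general. Your mixed framing, by contrast, shifts the limit by $(\tilde\lambda-\lambda)\bigl(x_1(0)-x_2(0)\bigr)/\bigl(2(\lambda+\tilde\lambda)\bigr)\neq0$.

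The paper's route buys generality of setting: any graph, with uniform message conventions as in its deployment story. But its fixed-point conclusion is asserted rather than derived.

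There are two caveats on your write-up:
\begin{enumerate}
\item The constant-$\lambda$ regime is an added modeling assumption, not ``the form used in the measurements.'' E1 shows $\lambda$ varying strongly with $(x_i,x_j)$ and sometimes negative, and $\hat f$ is a bilinear interpolant, not a constant-weight rule.
\item Your fallback via Proposition~\ref{PROP:TWO} does not separate fixed points. Any two positive drifts $b(c),b(\tilde c)$ both saturate at $1$, so this only separates $m(T)$ at finite $T$.
\end{enumerate}
So for a general content-dependent operator, what you (and, rigorously, the paper) establish is divergence of trajectories or finite-horizon states. Distinct fixed points are proved only inside your constant-$\lambda$ witness class. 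That is enough for the existential ``can'' in the statement, provided you say so explicitly.
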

\noindent \textit{Proof Sketch.} The complete proof sketch is provided in Appendix~\ref{PROOF:PROP:THREE}.

This means you cannot predict a network's final consensus just by analyzing who is connected to whom. The actual words and framing the agents use dynamically alter the effective structure of the network itself. The practical reading is that the wiring diagram of an agent system is not a sufficient description of it. Two deployments with the same topology but different message conventions, for example one where agents state confidence and one where they do not, are completely different dynamical systems.

\subsection{Why the standard repair does not apply and what replaces $\lambda_2$}
\label{SEC:REPLACE}

When human populations deviate from simple mathematical averages, researchers typically rely on bounded confidence models \cite{hegselmann2002,deffuant2000}. This theory assumes that individuals will naturally discount or ignore opinions that are vastly different from their own, meaning the influence weight should decrease as the disagreement gap grows. This does not fix the problem, because bounded confidence only changes which opinions an agent averages, so every update still stays inside the convex hull and it cannot produce the hull escapes and negative weights that break A1.

Taking the three failures together explains the paradigm shift away from classical metrics. In classical theory, algebraic connectivity (represented as $\lambda_2$) is the central metric because it uses the network's wiring diagram to predict the dynamics of the entire group. However, because the three foundational assumptions fail, the network's wiring entirely loses its predictive power. Group contraction is no longer guaranteed, the final destination is driven by the agent's internal bias rather than the initial starting average, and message content overrides the graph's edges. Instead of the graph's connectivity, the final consensus of an LLM population is governed by new predictors: the individual model's internal update bias $b$ and the group's overall confidence structure, such as the starting distribution of opinions and the fraction of agents that refuse to change their minds. We test this claim at $n=10^4$ (a population size of 10,000 agents) in Section \ref{sec:exp}.

\section{Experiments}
\label{sec:exp}

The previous section established five formal claims about how classical consensus fails when we remove its structural assumptions. Each claim describes population behavior rather than isolated interactions, meaning no single measurement can prove them alone. We built three different measurement instruments operating at distinct scales to address these requirements. Table \ref{tab:map} maps each instrument to its corresponding claim before we present the actual measurements.

\begin{table}[t]
\centering\small
\begin{tabular}{llp{0.40\linewidth}}
\toprule
Claim & Evidence & Requirement \\
\midrule
Prop.\ \ref{PROP:ONE}   & E1, E2 & Updates escape the hull, and the population diameter consequently grows \\
Cor.\ \ref{COR:NEG}     & E1     & The configuration $\lambda<0$ is reachable \\
Prop.\ \ref{PROP:TWO}   & E1, E2, E3 & A bias $b$ exists, and the destination ceases to track the initial condition \\
Cor.\ \ref{COR:FALSE}   & E2, E3 & Populations with $b>0$ on a false claim converge to belief in it \\
Prop.\ \ref{PROP:THREE} & E1, E3 & $\lambda$ varies with framing, and identical graphs consequently diverge \\
\bottomrule
\end{tabular}
\caption{Mapping from formal claim to supporting evidence.}
\label{tab:map}
\end{table}

\subsection{Experimental setup}

We study eight instruction-tuned models from four families, spanning a range of capabilities: Qwen2.5 at 0.5B, 1.5B, 3B, and 7B parameters, Gemma-3 at 4B, Gemma-2 at 9B, Llama-3.1 at 8B, and Mistral-v0.3 at 7B. Using four models from the Qwen family lets us separate the effects of model scale from the effects of different training data. Without this control, a trend visible across different families could easily be mistaken for a capability difference when it actually stems from pretraining data. We extracted every credence via the deterministic readout of Equation \eqref{eq:readout}. Repeatedly measuring a stored context reproduces its value to within $0.005$, which is an order of magnitude smaller than any effect we report. Experiments are implemented in PyTorch with CUDA on an NVIDIA RTX 4060 GPU.

The experiments use forty-five propositions divided into three groups. Fifteen are factually true, fifteen are widely believed falsehoods like the myth that goldfish have a three-second memory, and fifteen are genuinely uncertain topics with no settled answer. Including known falsehoods is critical, as it allows us to determine if the mean drift $b$ of assumption (A2) represents actual evidence gathering or just blind assent. A drift of equal magnitude on both true and false claims cannot possibly be evidence gathering.

\subsection{Three measurement instruments}

\paragraph{E1, operator measurement.}
The first instrument directly measures the update operator $f$. To do this in practice, we feed the AI a chat history where it appears to have already stated a specific confidence percentage, which locks in its prior belief $x_i$. The simulated user then introduces the peer's opinion $x_j$ using a specific wording template, and we measure exactly how the AI updates its final answer. We built a measurement grid using the full Cartesian product of seven induced prior positions and seven peer positions. These positions are

\[
  x_i,\, x_j \;\in\; \{0.05,\ 0.20,\ 0.35,\ 0.50,\ 0.65,\ 0.80,\ 0.95\},
\]

and we wrote them into the prompt as zero-padded two-digit integer percentages. We crossed these numerical values with four message framings (Table~\ref{tab:framings} in Appendix~\ref{app:framings}) across all forty-five propositions, yielding $8{,}820$ peer measurements per model. We also included a strict control condition with no peer message, giving $315$ control measurements per model. This control is necessary because models change their stated credence even without social input, and we had to isolate the model's natural internal drift from genuine peer influence. In total, this setup produces $9{,}135$ measurements per model and $73{,}080$ across all eight models.

When calculating the realized weight $\lambda$ from Equation \eqref{eq:lambda}, the value becomes undefined when $x_j = x_i$ and gets numerically unstable when the two values are very close because the denominator vanishes. To avoid this, we computed all statistics involving $\lambda$ strictly on the subset of interactions where the agents disagreed by at least $0.30$, which leaves $5{,}040$ measurements per model. For hull escape, mean drift, and all other statistics, we used the full grid. Testing the conservation assumption (A2) requires looking at both directions of a pairwise exchange. Since $x_i$ and $x_j$ range over the exact same set of values, the grid already contains both $f(x_i, x_j)$ and $f(x_j, x_i)$, meaning no extra measurements are necessary.

\paragraph{Fitted operator.}
For one model, one claim and one framing, E1 gives the agent's new credence $x_i'$ at each of the $7\times7$ grid pairs $(x_i, x_j)$. We turn this table into a function $\hat f(x_i, x_j)$ by bilinear interpolation between grid points and clip its output to $[0,1]$. We call $\hat f$ the fitted operator. We need it because running the real models at population scale is far too costly: E3 has $1{,}620$ runs of $10^4$ agents over $30$ rounds, about $5\times10^8$ updates per model and over $60{,}000$ times the size of E2. We use $\hat f$ in two places. In E2 we run it next to the real agents to check how well it predicts them (Section~\ref{sec:fidelity}), and in E3 it stands in for the real model in every $10^4$-agent run.

\paragraph{E2, real populations.}
The second instrument places $12$ real agents on a graph and runs them for $8$ rounds. We tested complete, ring, star, and small-world topologies. By using five random seeds and four specific propositions (\texttt{t1}, \texttt{f1}, \texttt{f3}, and \texttt{u2} from Table~\ref{tab:props}), we generated $80$ distinct populations per model, which resulted in $7{,}680$ agent updates per model. We chose this specific subset of claims to represent true, false, and uncertain categories while keeping the computational cost of simulating entire populations manageable. We ran this experiment for all eight models, giving $640$ populations and $61{,}440$ agent updates in total, always with the \texttt{neutral} message framing.

To ensure a perfect comparison, we generated the exact interaction schedules and initial conditions in advance. Every agent updates synchronously by reading the state from the previous round. We then replayed these identical scenarios through three different dynamics: the real LLM agents (where inputs were rounded to integer percentages), the fitted operator $\hat f$, and classical DeGroot averaging (using a uniform weight of $\alpha = 0.5$ to represent a single observed neighbor). 

Because all three dynamics followed the exact same sequence of encounters from the exact same starting points, we can compare their trajectories agent-by-agent and round-by-round rather than just looking at overall distributions. We evaluated the results one step ahead using teacher forcing, giving each predictor the exact states the real agents held in the previous round to completely isolate single-step accuracy from cumulative errors.

\paragraph{E3, extrapolation to population scale.}
The third instrument scales our measurements up to $10{,}000$ agents. In every round, each agent draws one random neighbor to interact with, applying the fitted operator for a total of $30$ rounds. We generated $27$ different graphs using three network families (Watts-Strogatz, Erdos-Renyi, and Barabasi-Albert) across various densities, giving us algebraic connectivities ranging from $\lambda_2=0$ to $\lambda_2=5.49$. 

Against these graphs, we systematically varied the population's starting confidence structure. We swept the initial population mean across $\mu \in \{0.3, 0.5, 0.7\}$ and the stubborn agent fraction across $\text{stub} \in \{0, 0.01, 0.02, 0.05, 0.10\}$ for four specific propositions, yielding $1{,}620$ unique runs for each of the eight model operators. We anchored all committed agents strictly to the \emph{true} pole at a $0.95$ credence. Restoring them to this exact value after every round represents the absolute best-case scenario for these agents to successfully capture the rest of the population. 

We classified a run as reaching consensus if the final standard deviation dropped below $0.12$, and we labeled it as polarized if over $20$ percent of agents landed below $0.33$ while another $20$ percent landed above $0.67$. Finally, to prove exactly what drove the final outcomes, we used an ordinary least squares commonality analysis. This variance decomposition explicitly separated the predictive power of the network's connectivity (using $\lambda_2$ and mean degree) from the population's confidence structure (using the initial mean and committed fraction).

\subsection{Validating Proposition \ref{PROP:ONE}: updates leave the hull}

Table \ref{tab:one} and Figure \ref{fig:one}(a) show that agents frequently adopt positions outside the convex hull, directly violating classical convexity. To test if this actually causes the overall population diameter $D(t)$ to grow over time, we observed the real agent populations in E2. Depending on the model, the diameter increased in $5.9\%$ to $25.6\%$ of all rounds. In contrast, running the exact same interaction schedules through classical DeGroot averaging produced zero increases. This proves language model agents genuinely spread apart during interaction. Consequently, network connectivity alone is not enough to pull them back into agreement.

\begin{table*}[t]
\centering\small
\begin{tabular}{llrrrrrr}
\toprule
Model & Family & A1 esc. & strict & $\lambda{<}0$ & A2 $|$drift$|$ & $\Delta$true & $\Delta$false \\
\midrule
Qwen2.5-0.5B & Qwen & 68.3\% & 38.0\% & 21.9\% & 0.252 & +0.229 & +0.248 \\
Qwen2.5-1.5B & Qwen & 46.9\% & 10.2\% & 18.6\% & 0.144 & +0.100 & +0.096 \\
Qwen2.5-3B & Qwen & 22.3\% & 8.5\% & 8.1\% & 0.069 & +0.003 & -0.070 \\
Qwen2.5-7B$^{\dagger}$ & Qwen & 38.1\% & 12.7\% & 21.5\% & 0.086 & +0.069 & -0.025 \\
gemma-3-4b & Gemma & 34.0\% & 8.0\% & 13.2\% & 0.097 & +0.102 & -0.015 \\
gemma-2-9b$^{\dagger}$ & Gemma & 38.0\% & 4.9\% & 24.2\% & 0.076 & +0.107 & -0.040 \\
Llama-3.1-8B$^{\dagger}$ & Llama & 41.4\% & 3.7\% & 14.3\% & 0.127 & +0.187 & -0.088 \\
Mistral-7B$^{\dagger}$ & Mistral & 35.2\% & 8.5\% & 4.7\% & 0.100 & +0.050 & -0.084 \\
\midrule
\textit{classical} & n/a & 0\% & 0\% & 0\% & 0.000 & 0.000 & 0.000 \\
\bottomrule
\end{tabular}
\caption{Failure of the three assumptions across eight models; classical theory
predicts zero throughout. A1 esc.: updates leaving the hull; strict: only those
against the model's no-peer drift. $\lambda<0$: updates moving away from the peer.
A2 $|$drift$|$: mean absolute change in the pair mean; $\Delta$true, $\Delta$false:
signed drift by claim type. $^{\dagger}$Run in 4-bit.}
\label{tab:one}
\end{table*}

\begin{figure*}[t]
\centering
\includegraphics[width=\textwidth]{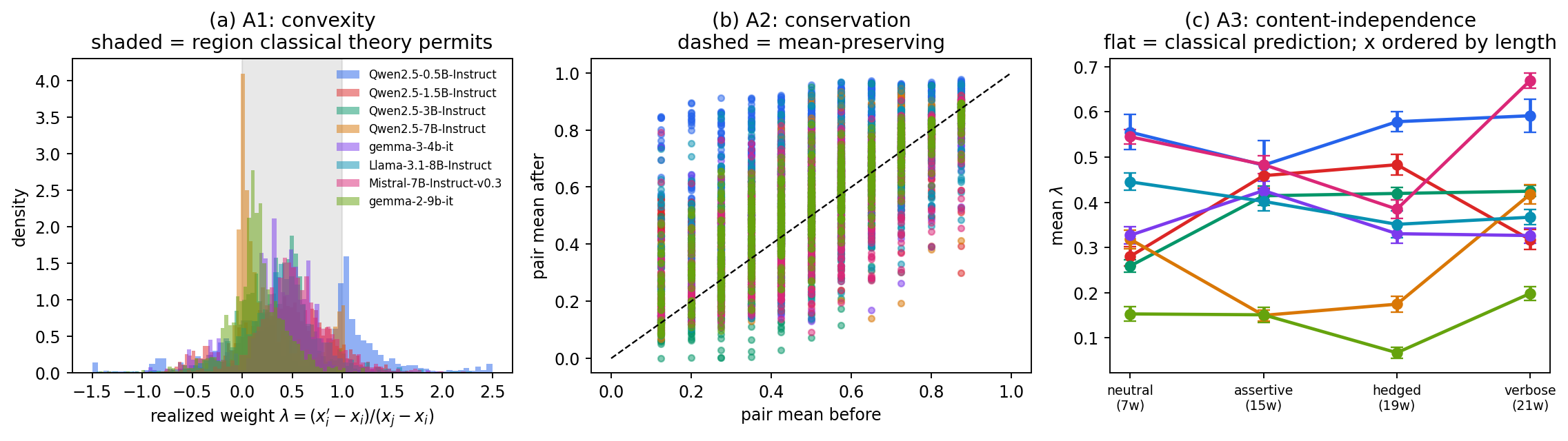}
\caption{Failure of all three structural assumptions across eight models. (a) The
distribution of the realized weight $\lambda$, with the shaded band indicating
the only values classical theory permits. (b) The pair mean before and after a
symmetric exchange, where the dashed diagonal represents what conservation
requires. (c) The mean realized weight by framing, with framings ordered by
message length, so that a pure verbosity effect would appear as a monotone
trend.}
\label{fig:one}
\end{figure*}

\subsection{Validating Corollary \ref{COR:NEG}: some updates move away from the peer}

As shown in Table \ref{tab:one}, a significant fraction of updates carry a negative realized weight across all models. While positive overshoots ($\lambda>1$) simply mean a linear model requires different parameters, a negative weight represents active repulsion. Because no non-negative row stochastic matrix can contain a negative entry under any parameterization, these specific updates prove that the classical linear model has the completely wrong functional form for describing AI agents.

\subsection{Validating Proposition \ref{PROP:TWO}: the end point stops depending on the start}

Figure \ref{fig:one}(b) highlights a systematic displacement from the mean-preserving diagonal, revealing a persistent per-exchange bias $b$ rather than true conservation. To test if this decouples the final consensus from the initial condition, we regressed the final population mean upon the starting mean $\mu$. For the heavily biased 0.5B and 1.5B operators, the slope is effectively zero. A population starting at a $0.3$ average credence terminates in the exact same place as one starting at $0.7$. The 3B model maintains classical behavior strictly because its individual bias is too small to reach saturation within thirty rounds. Qwen-7B, Gemma-4B, Mistral and Gemma-2 fall in between, with slopes from $0.17$ to $0.49$. Llama's slope is near zero ($0.02$), because it drifts strongly up on true claims and down on false ones. The real populations of E2 show the same pattern without any simulation.

\subsection{Validating Corollary \ref{COR:FALSE}: drift on false claims decides the outcome}

Figure \ref{fig:five} shows that populations of 0.5B and 1.5B agents reliably converge to confident belief in known falsehoods across every tested topology. Because their upward drift is virtually identical on both true and false claims, these networks are not gathering evidence. They are simply assenting. This blind agreement systematically drives the entire population into a confident false consensus. Qwen-7B and Gemma-4B drift less on false claims than on true ones, but their overall upward drift is still strong enough that $83\%$ and $88\%$ of their simulated runs end up believing the false claims. Their real 12-agent populations in E2 agree, finishing at credences of $0.73$ and $0.79$ on false claims. Four models reject these false claims. Llama, Mistral and Gemma-2 do so while staying high on true claims, and their real 12-agent populations finish at $0.14$, $0.47$ and $0.35$ on false claims. The 3B model also rejects them, but only because it drifts down in general, so its real populations also end below $0.5$ on the true claim. This does not go against Corollary~\ref{COR:FALSE}, which only says that a group ends up believing a false claim if $b>0$ on that claim. The four models that reject false claims all have a clearly negative drift on false claims in Table~\ref{tab:one}, from $-0.04$ to $-0.09$, so the condition of the corollary does not hold for them and their groups move down instead. Qwen-7B and Gemma-4B are close to zero, at $-0.025$ and $-0.015$, and their groups still drift up.

\begin{figure}[t]
\centering
\includegraphics[width=0.7\columnwidth]{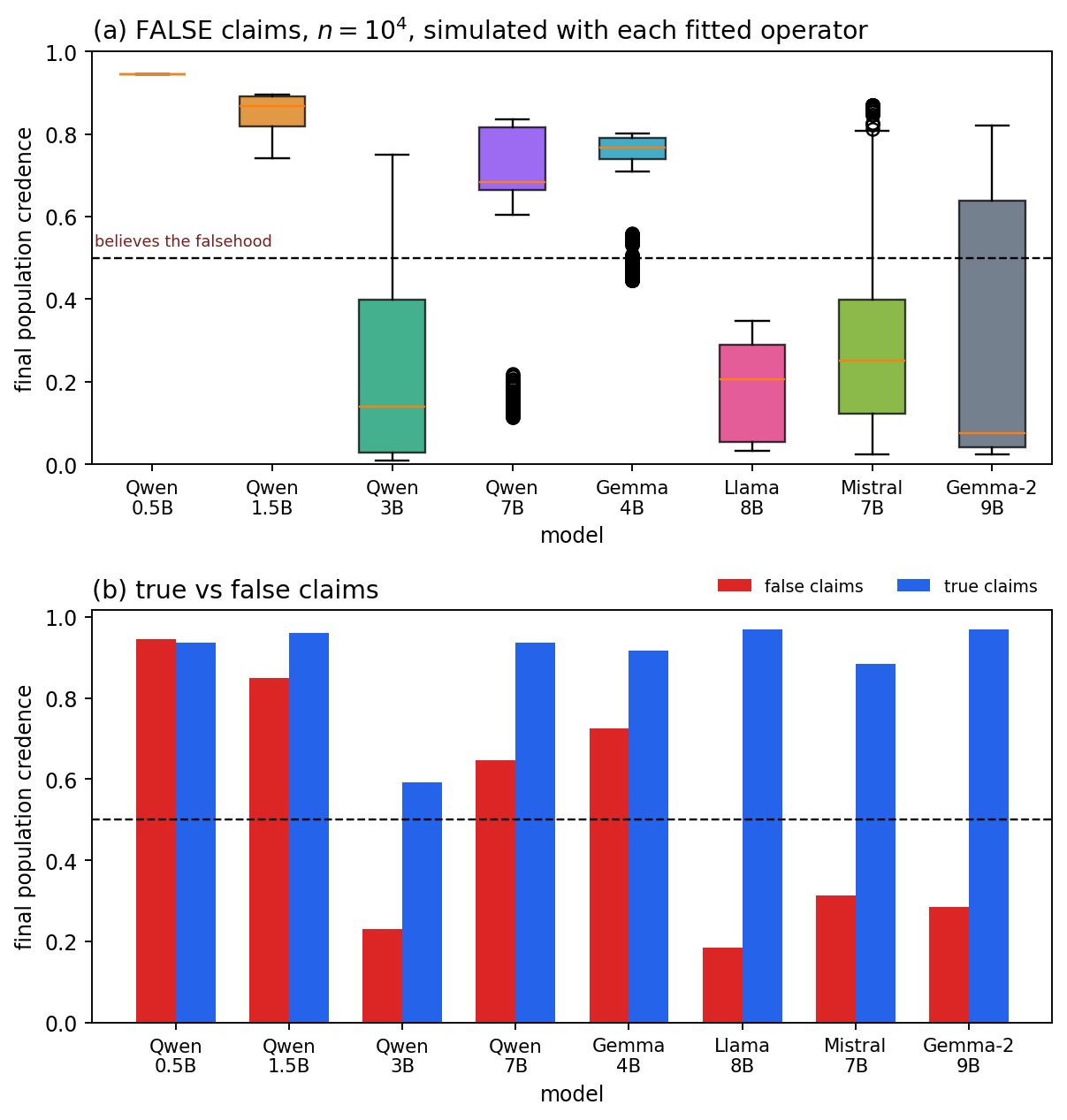}
\caption{Population outcomes at $n=10^4$. (a) Final credence on false claims for
each model. (b) True and false claims compared. Qwen2.5-3B, Llama, Mistral and
Gemma-2 end below $0.5$ on false claims.}
\label{fig:five}
\end{figure}

\subsection{Validating Proposition \ref{PROP:THREE}: framing changes the end point}

Message framing (\texttt{neutral}, \texttt{assertive}, \texttt{hedged} or \texttt{verbose}; Table~\ref{tab:framings} in Appendix~\ref{app:framings}) significantly alters the effective mathematical weight for most models, and this shift is driven by epistemic content rather than mere word count. To test the population scale impact, we ran identical networks with the exact same starting conditions and changed only the message framing. Table \ref{tab:framing} shows this single change drastically displaces the final population mean. Crucially, even tiny framing effects that are statistically undetectable in single pairwise exchanges compound over thirty rounds into massive shifts in the group's final destination. This proves that systems appearing content independent in isolated exchanges can still diverge completely as populations based solely on wording.

\begin{table}[t]
\centering\small
\setlength{\tabcolsep}{4.5pt} 
\begin{tabular}{lrrrrrr}
\toprule
& \multicolumn{4}{c}{Final mean by framing} & \multicolumn{2}{c}{Spread} \\
\cmidrule(lr){2-5}\cmidrule(lr){6-7}
Operator & neut. & asrt. & hedg. & verb. & mean & max \\
\midrule
Qwen2.5-0.5B & 0.946 & 0.777 & 0.958 & 0.953 & 0.181 & 0.304 \\
Qwen2.5-1.5B & 0.904 & 0.946 & 0.873 & 0.841 & 0.116 & 0.388 \\
Qwen2.5-3B   & 0.445 & 0.299 & 0.607 & 0.495 & 0.431 & 0.851 \\
Qwen2.5-7B   & 0.763 & 0.481 & 0.488 & 0.732 & 0.358 & 0.776 \\
gemma-3-4b   & 0.811 & 0.484 & 0.532 & 0.709 & 0.346 & 0.659 \\
Llama-3.1-8B & 0.543 & 0.350 & 0.494 & 0.542 & 0.196 & 0.564 \\
Mistral-7B   & 0.603 & 0.417 & 0.462 & 0.522 & 0.308 & 0.839 \\
gemma-2-9b   & 0.626 & 0.480 & 0.389 & 0.655 & 0.278 & 0.789 \\
\bottomrule
\end{tabular}
\caption{Final population mean under each message framing, with the graph, the
density, the seed, the initial condition, and the neighbour draws held identical
across framings. The per cell spread is the range across the four framings
within a single sweep cell, and is therefore a matched comparison rather than a
difference of averages.}
\label{tab:framing}
\end{table}

\subsection{Validating the fitted operator}
\label{sec:fidelity}

To justify using the fitted operator for massive scale simulations, we validated its accuracy against real agent populations. Evaluated one step ahead, the fitted operator consistently outperforms classical DeGroot averaging across all topologies and all eight models, with DeGroot's error between $1.28$ and $4.05$ times larger (Figure~\ref{fig:two}(b) and (c) show this for the 3B model). While DeGroot approximates single steps reasonably well, its true failure lies in predicting the final destination. As Figure \ref{fig:two}(a) shows for the 3B model, when real agents discuss a false claim, they correctly lower their credence to reject it. DeGroot, however, mathematically forces the trajectory upward to preserve the initial mean. A classical model can therefore predict individual moves accurately but still drive the group to an entirely wrong consensus by enforcing false mathematical invariants.

\begin{figure*}[t]
\centering
\includegraphics[width=\textwidth]{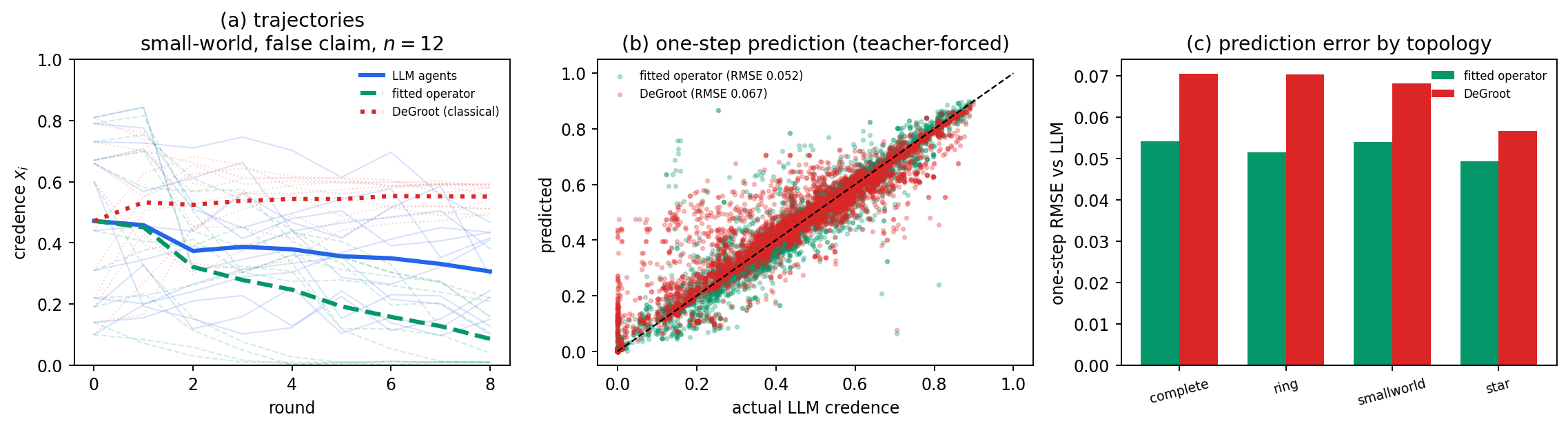}
\caption{Validation against eighty populations of twelve real Qwen2.5-3B agents. (a)
Trajectories on a false claim, in which DeGroot rises toward the conserved
initial mean while the real population falls. (b) One step teacher forced
prediction against ground truth. (c) Prediction error by topology.}
\label{fig:two}
\end{figure*}

\subsection{Validating Section \ref{SEC:REPLACE}: connectivity explains almost none of the outcome}

To test whether predictive power still resides in the graph, we decomposed the outcome variance into factors driven by connectivity versus confidence structure. For all eight operators, the network's algebraic connectivity ($\lambda_2$) and mean degree uniquely explain zero percent of the final population mean. The initial confidence structure explains more, but how much depends on the model, ranging from almost nothing to $0.23$ of the final mean. The real populations of E2 point the same way. Changing the topology moves a model's final credence on false claims by at most $0.11$, while changing the model moves it by up to $0.82$. The central metric of classical consensus theory therefore carries essentially no predictive information for AI agent populations.

\subsection{Committed agents fragment rather than capture}

In models of human opinion dynamics, a committed minority typically steers the population toward its own position. We observe no such effect here. Raising the fraction of stubborn agents to ten percent moves the final population mean by at most $0.15$. Instead, these committed agents simply prevent agreement. For the 3B model, the consensus rate drops from $90.7\%$ to $16.7\%$ with no increase in polarization. Qwen-7B and Gemma-4B show the same effect more mildly, with consensus dropping from $91.4\%$ to $70.4\%$ and from $100\%$ to $83.3\%$, while their final means move by only about $0.03$. Llama, Mistral and Gemma-2 show the same pattern, with consensus dropping from $100\%$ to $0\%$, from $91.4\%$ to $9.9\%$, and from $90.7\%$ to $25.9\%$. Stubborn AI agents disperse and fragment the population rather than successfully capturing it.

\section{Conclusion}

Classical consensus theory fails to describe populations of LLM agents because all three structural assumptions break down under direct measurement. The standard non-linear repairs for human populations also fail, because they still keep every update inside the convex hull. The practical consequence is that predicting a multi-agent system's behavior by analyzing its network topology is fundamentally flawed. Algebraic connectivity explains almost none of the variation in where a population lands. Instead, system dynamics are driven by the chosen model's internal update bias and the specific wording of the prompts. A designer choosing between different network graphs is adjusting a variable that hardly matters.

This failure highlights a concrete deployment risk. Populations of four of our eight models, including Qwen2.5 at 7B, reliably talk themselves into confident agreement on objectively false claims, regardless of how their network is wired. Small upward drifts in isolated exchanges compound until the entire network saturates in collective delusion. This does not go away with model size. The 7B Qwen model's populations believe false claims, while those of the smaller 3B Qwen model and of Llama-8B, Mistral-7B and Gemma-2-9B do not, so the outcome depends on each model's own drift rather than on how large it is.

\section{Limitations}

Two main limitations bound these findings. First, our tested models span 0.5B to 9B parameters, meaning these findings may not perfectly extrapolate to frontier models like GPT, Claude or Gemini, which are far larger and whose token probabilities are often not fully available, so our readout cannot be applied to them directly. Second, our state variable relies on a simple scalar credence for a single proposition, which lacks the multidimensional complexity of beliefs in deployed systems.

\section{Broader Impact}
\label{SEC:BROADER-IMPACT}

This work studies how groups of LLM agents update their beliefs when they talk to each other, a setup now common in multi-agent debate, voting, and agent teams. We show that how the agents are connected says little about where the group ends up, and that some models reach confident agreement on false claims. This helps designers of such systems avoid treating group agreement as a sign of correctness, and gives them a simple pairwise test to check a model's drift on known false claims before using it in a group. All experiments use public open-weight models and public factual claims, and involve no human participants or personal data.

\subsection*{Acknowledgment: GenAI Use}
ChatGPT~\cite{chatgpt} has been used to improve the syntax and grammar of several sections in the manuscript. It has also been used to identify and review literature on multi-agent systems, opinion dynamics, and distributed consensus theory.

\bibliography{neurips_2025}
\bibliographystyle{plainnat}

\newpage
\appendix

\section{Proofs of the claims}

\subsection{Proof of Proposition~\ref{PROP:ONE}: Hull escape destroys monotone contraction}
\label{PROOF:PROP:ONE}
\begin{proof}
This proposition demonstrates how agents adopting positions outside their visible convex hull destroy the monotone contraction of the group's diameter.

Let the diameter of the population at time $t$ be defined as $D(t) = \max_k x_k(t) - \min_k x_k(t)$. 
During an update step, agent $i$ observes a subset of states $\mathcal{N}_i \cup \{i\}$. The convex hull of these visible states is bounded by the local minimum and local maximum:
$$[\min_{j \in \mathcal{N}_i \cup \{i\}} x_j(t), \max_{j \in \mathcal{N}_i \cup \{i\}} x_j(t)]$$
Because these local extremes are drawn from the global population, they are strictly bounded by the global extremes:
$$\min_k x_k(t) \le \min_{j \in \mathcal{N}_i \cup \{i\}} x_j(t) \quad \text{and} \quad \max_{j \in \mathcal{N}_i \cup \{i\}} x_j(t) \le \max_k x_k(t)$$
By the premise of the proposition, an agent's new state $x_i(t+1)$ can escape its local convex hull by at most a margin of $\delta > 0$. This establishes the absolute upper and lower bounds for the updated state:
$$x_i(t+1) \le \max_{j \in \mathcal{N}_i \cup \{i\}} x_j(t) + \delta \le \max_k x_k(t) + \delta$$
$$x_i(t+1) \ge \min_{j \in \mathcal{N}_i \cup \{i\}} x_j(t) - \delta \ge \min_k x_k(t) - \delta$$
The new diameter at time $t+1$ is $D(t+1) = \max_i x_i(t+1) - \min_i x_i(t+1)$. Substituting the worst case bounds yields:
$$D(t+1) \le (\max_k x_k(t) + \delta) - (\min_k x_k(t) - \delta)$$
$$D(t+1) \le \max_k x_k(t) - \min_k x_k(t) + 2\delta$$
$$D(t+1) \le D(t) + 2\delta$$
This bound is perfectly attained if at least one agent updates to the maximum possible positive deviation while another independently updates to the maximum possible negative deviation in the same round. Consequently, $D(t)$ is no longer monotonically decreasing, meaning the network is not guaranteed to contract toward agreement.
\end{proof}

\subsection{Proof of Corollary~\ref{COR:NEG}: Negative weights are unrepresentable}
\label{PROOF:COR:NEG}
\begin{proof}
This corollary proves that a realized negative weight fundamentally breaks the row stochastic matrix requirement of the classical linear model.

The classical consensus model dictates that the updated state $x_i'$ is a weighted sum of the states the agent observes:
$$x_i' = \sum_{k} W_{ik} x_k(t)$$
For a controlled pairwise exchange between agent $i$ and a peer $j$, the only visible states are $x_i$ and $x_j$. The update equation simplifies to:
$$x_i' = W_{ii} x_i + W_{ij} x_j$$
The classical model strictly requires the weight matrix $W$ to be row stochastic, meaning all entries are non-negative ($W_{ik} \ge 0$) and each row sums to exactly $1$ ($\sum_k W_{ik} = 1$). For this isolated pairwise interaction, $W_{ii} + W_{ij} = 1$, allowing us to substitute $W_{ii} = 1 - W_{ij}$:
$$x_i' = (1 - W_{ij}) x_i + W_{ij} x_j$$
$$x_i' = x_i - W_{ij} x_i + W_{ij} x_j$$
$$x_i' = x_i + W_{ij} (x_j - x_i)$$
Isolating the weight applied to the peer's state gives:
$$W_{ij} = \frac{x_i' - x_i}{x_j - x_i}$$
This isolated term matches the exact definition of the realized weight $\lambda(x_i, x_j, c_j)$. Therefore, $W_{ij} = \lambda$.
If the empirically measured update yields $\lambda < 0$, it dictates that $W_{ij} < 0$. A negative entry is explicitly forbidden by the definition of a non-negative row stochastic matrix, proving that no such matrix can reproduce this specific update for any choice of weights.
\end{proof}

\subsection{Proof of Proposition~\ref{PROP:TWO}: Operator bias relocates the fixed point}
\label{PROOF:PROP:TWO}
\begin{proof}
This proposition proves mathematically that any persistent bias in the update rule dictates the system's final destination, rendering the initial conditions irrelevant over a sufficiently long time horizon.

Let $m(t)$ represent the population mean at time step $t$. By assumption, each exchange shifts the mean by an expected bias $b > 0$, meaning the recurrence relation for the unbounded mean is:
$$m(t+1) = m(t) + b$$
We can solve this linear recurrence relation directly by induction. For $t=1$:
$$m(1) = m(0) + b$$
For $t=2$:
$$m(2) = m(1) + b = (m(0) + b) + b = m(0) + 2b$$
Assuming the relation holds for an arbitrary step $t$, we have:
$$m(t) = m(0) + tb$$
However, the state variable $x_i(t)$ represents a credence probability and is strictly bounded within the interval $[0, 1]$. Therefore, the population mean $m(t)$ must also remain interior to $[0, 1]$. Applying this absolute upper bound gives the true state at time $T$:
$$m(T) = \min\{1, m(0) + Tb\}$$
To prove that the population eventually sits arbitrarily close to $1$ independently of $m(0)$, we solve for the horizon $T$ required to reach the maximum bound:
$$1 \le m(0) + Tb$$
$$1 - m(0) \le Tb$$
$$T \ge \frac{1 - m(0)}{b}$$
Because $b > 0$ and $m(0) \in [0, 1]$, the term $\frac{1 - m(0)}{b}$ is always a finite, non-negative number. This proves there is always a finite time horizon $T$ where the mean will saturate at $1$, completely overriding the initial mean $m(0)$.
\end{proof}

\subsection{Proof of Corollary~\ref{COR:FALSE}: Confident false consensus}
\label{PROOF:COR:FALSE}
\begin{proof}
This corollary applies the mathematical limit established in Proposition~\ref{PROP:TWO} to the specific context of factuality and collective delusion.

Let the proposition under discussion be objectively false. In a purely evidence-gathering system, the population mean $m(t)$ should ideally converge toward $0$. 

However, suppose the empirical update operator exhibits a strictly positive bias ($b > 0$) for this false proposition. By Proposition~\ref{PROP:TWO}, the system's dynamics are governed by:
$$m(T) = \min\{1, m(0) + Tb\}$$
Even if the system designers initialize the agents with accurate, highly skeptical prior beliefs (e.g., setting $m(0)$ close to $0$), the term $Tb$ strictly increases as interactions occur. The initial condition $m(0)$ merely dictates how many interaction rounds $T$ it will take for the system to saturate. 

Because the bias $b$ adds positive drift regardless of the proposition's truth value, the population mean is structurally forced to approach $1$. Thus, the population will invariably converge to a high credence state, resulting in a confident but entirely false consensus that no choice of starting conditions can prevent.
\end{proof}

\subsection{Derivation of Proposition~\ref{PROP:THREE}: Content dependence underdetermines the graph}
\label{PROOF:PROP:THREE}
\begin{proof}
This proposition demonstrates mathematically that if communication weights rely on message framing, the network topology alone is strictly insufficient to predict the dynamic trajectory or final consensus of the system.

In classical consensus theory, the weight matrix $W$ is strictly a function of the graph topology $\mathcal{G} = (\mathcal{V}, \mathcal{E})$. We can denote this invariant matrix as $W_{\mathcal{G}}$. The state update for the entire network is written as:
$$x(t+1) = W_{\mathcal{G}} \, x(t)$$
Given an initial condition $x(0)$, the trajectory of the system is entirely deterministic and defined by:
$$x(t) = W_{\mathcal{G}}^t \, x(0)$$
However, empirical measurements dictate that the realized weight $\lambda$ is a function of the message framing $c_j$. As established in Corollary~\ref{COR:NEG}, the realized weight corresponds directly to the matrix entry: $W_{ij} = \lambda(x_i, x_j, c_j)$. 

Because $W_{ij}$ depends on the continuous framing variables vector $c(t)$ at time $t$, the weight matrix is no longer a static property of the graph $\mathcal{G}$. It becomes a time-varying, content-dependent matrix $W(c(t))$. The state update equation becomes:
$$x(t+1) = W(c(t)) \, x(t)$$
To prove underdetermination, consider two separate multi-agent deployments, System A and System B. Both share the exact same network topology $\mathcal{G}$ and the exact same initial state $x(0)$. 

Let System A adopt a communication convention using framing $c_A$ (e.g., highly assertive framing), and System B adopt framing $c_B$ (e.g., highly hedged framing), where $c_A \neq c_B$. At time step $t=1$, their respective updates are:
$$x_A(1) = W(c_A) \, x(0)$$
$$x_B(1) = W(c_B) \, x(0)$$
Because the measured realized weight varies significantly with framing, $W(c_A) \neq W(c_B)$. Assuming a non-trivial $x(0)$, this guarantees that:
$$x_A(1) \neq x_B(1)$$
Despite possessing identical wiring diagrams and starting beliefs, the trajectories of the two systems diverge immediately at the first timestep. Consequently, they will cascade toward completely different fixed points. Therefore, no mathematical function $f(\mathcal{G})$ that maps only the graph topology to a consensus outcome can correctly predict the fixed points of both System A and System B. The graph fundamentally underdetermines the dynamics.
\end{proof}

\section{Experimental Details}
\label{app:repro}

This appendix gives every element needed to reproduce the three experiments,
namely the exact propositions, the exact message framings, the exact prompts,
the readout procedure, etc.

\subsection{Propositions}
\label{app:props}

We use forty-five propositions split into three groups: fifteen facts, fifteen common myths, and fifteen genuinely uncertain statements. We intentionally included known falsehoods to figure out if the models' internal drift actually tracks truth or just represents blind agreement. If a model drifts the same amount on both true and false claims, it is definitely not gathering evidence.

\begin{table*}[t]
\centering\small
\begin{tabular}{llp{0.80\linewidth}}
\toprule
ID & Truth & Proposition text (verbatim) \\
\midrule
\texttt{t1} & true & ``Water expands when it freezes.'' \\
\texttt{t2} & true & ``Iceland has no native mosquito population.'' \\
\texttt{t3} & true & ``The human genome contains fewer than 30,000 protein-coding genes.'' \\
\texttt{t4} & true & ``Venus rotates on its axis in the opposite direction to most planets.'' \\
\texttt{t5} & true & ``Octopuses have three hearts.'' \\
\texttt{t6} & true & ``Bananas are botanically classified as berries.'' \\
\texttt{t7} & true & ``Sound travels faster in water than in air.'' \\
\texttt{t8} & true & ``Sharks existed on Earth before trees did.'' \\
\texttt{t9} & true & ``Wombats produce cube-shaped droppings.'' \\
\texttt{t10} & true & ``The University of Oxford is older than the Aztec Empire.'' \\
\texttt{t11} & true & ``The Eiffel Tower is taller in summer than in winter.'' \\
\texttt{t12} & true & ``Humans share more than 95 percent of their DNA with chimpanzees.'' \\
\texttt{t13} & true & ``Cleopatra lived closer in time to the Moon landing than to the building of the Great Pyramid of Giza.'' \\
\texttt{t14} & true & ``The Pacific Ocean is larger than all of Earth's land area combined.'' \\
\texttt{t15} & true & ``Light from the Sun takes about eight minutes to reach Earth.'' \\
\midrule
\texttt{f1} & false & ``The Great Wall of China is visible to the naked eye from the Moon.'' \\
\texttt{f2} & false & ``Goldfish have a memory span of only three seconds.'' \\
\texttt{f3} & false & ``Humans use only 10 percent of their brains.'' \\
\texttt{f4} & false & ``Lightning never strikes the same place twice.'' \\
\texttt{f5} & false & ``Bulls are enraged by the color red.'' \\
\texttt{f6} & false & ``Bats are blind.'' \\
\texttt{f7} & false & ``Shaving makes hair grow back thicker.'' \\
\texttt{f8} & false & ``Napoleon Bonaparte was unusually short for a man of his time.'' \\
\texttt{f9} & false & ``Different parts of the tongue detect different basic tastes.'' \\
\texttt{f10} & false & ``Ostriches bury their heads in the sand when frightened.'' \\
\texttt{f11} & false & ``Vikings wore horned helmets in battle.'' \\
\texttt{f12} & false & ``Eating sugar makes children hyperactive.'' \\
\texttt{f13} & false & ``A penny dropped from the top of the Empire State Building can kill a person.'' \\
\texttt{f14} & false & ``Dogs see only in black and white.'' \\
\texttt{f15} & false & ``Cracking your knuckles causes arthritis.'' \\
\midrule
\texttt{u1} & uncertain & ``More than half the world's population will live in cities of over one million people by 2050.'' \\
\texttt{u2} & uncertain & ``A randomly selected adult in the United States owns a valid passport.'' \\
\texttt{u3} & uncertain & ``Nuclear power will supply more than 15 percent of global electricity in 2040.'' \\
\texttt{u4} & uncertain & ``The majority of undiscovered mammal species are rodents or bats.'' \\
\texttt{u5} & uncertain & ``Humans will land on Mars before 2040.'' \\
\texttt{u6} & uncertain & ``The global average temperature in 2050 will be more than 2 degrees Celsius above pre-industrial levels.'' \\
\texttt{u7} & uncertain & ``Life exists or once existed on Mars.'' \\
\texttt{u8} & uncertain & ``More than half of all new cars sold worldwide in 2035 will be fully electric.'' \\
\texttt{u9} & uncertain & ``The world population will peak before 2080.'' \\
\texttt{u10} & uncertain & ``A treatment that cures Alzheimer's disease will be approved before 2040.'' \\
\texttt{u11} & uncertain & ``The Voynich manuscript is written in a real, meaningful language.'' \\
\texttt{u12} & uncertain & ``There are more than one trillion species of microbes on Earth.'' \\
\texttt{u13} & uncertain & ``Intelligent life exists elsewhere in the Milky Way.'' \\
\texttt{u14} & uncertain & ``Dark matter particles will be directly detected before 2045.'' \\
\texttt{u15} & uncertain & ``Average human life expectancy worldwide will exceed 80 years by 2060.'' \\
\bottomrule
\end{tabular}
\caption{The forty-five propositions, verbatim. E1 uses all forty-five. E2 and E3 use
the four marked \texttt{t1}, \texttt{f1}, \texttt{f3}, \texttt{u2}.}
\label{tab:props}
\end{table*}

\subsection{Message framings}
\label{app:framings}

A framing dictates how a peer's credence is phrased while keeping the actual numeric value ($x_j$) exactly the same. The prompt replaces the \texttt{\{xj\}} token with a two-digit integer percentage representing that peer's stated confidence. We included the \texttt{verbose} framing specifically to separate a message's stated confidence from its overall word count. Since a confident message naturally tends to be longer, the verbose framing provides a long but completely neutral message. If word count alone drove the content effect, the verbose prompt would behave like the marked framings instead of the neutral one.

\begin{table}[h]
\centering\small
\begin{tabular}{llp{0.56\linewidth}}
\toprule
Framing & Words & Template (verbatim) \\
\midrule
\texttt{neutral} & 7 & Another participant said: ``My estimate is \{xj\}\%.'' \\[2pt]
\texttt{assertive} & 15 & Another participant, who seemed highly confident, said: ``I'm certain about this one. It is \{xj\}\%.'' \\[2pt]
\texttt{hedged} & 19 & Another participant, who seemed quite unsure, said: ``I really don't know, I'm just guessing here, but maybe around \{xj\}\%.'' \\[2pt]
\texttt{verbose} & 21 & Another participant, who was also taking part in this same estimation task today, said: ``My estimate for this one is \{xj\}\%.'' \\[2pt]
\texttt{control} & --- & No peer message is shown. See Section \ref{app:prompts}. \\
\bottomrule
\end{tabular}
\caption{Message framings. All four peer framings carry an identical numeric
value and differ only in wording.}
\label{tab:framings}
\end{table}

\subsection{Exact prompts}
\label{app:prompts}

Each measurement consists of a four-turn chat. By placing the prior position $x_i$ in the assistant's turn, we force the agent to adopt the position as its own stated belief rather than having it assigned externally. The final user turn then introduces the peer's message. In the control condition, this turn simply asks the model to reconsider its estimate without providing any social input. This control is crucial because it isolates the model's natural internal drift from genuine peer influence, which directly enables the strict hull escape measurements discussed in Section \ref{sec:exp}.

\paragraph{System turn.}
\begin{verbatim}
You are a participant in a group
estimation task. You give calibrated
probability estimates as percentages.
You answer with a number only.
\end{verbatim}

\paragraph{Turn structure.} With \texttt{\{P\}} the proposition text, \texttt{\{xi\}} the induced prior as a zero-padded two-digit integer, and \texttt{\{peer\}} the rendered framing from Table \ref{tab:framings}:

\begin{verbatim}
[system] <system turn above>

[user]   Claim: "{P}"

         What is your confidence that
         this claim is TRUE? Answer
         with a two-digit percentage
         from 00 to 99.

[assistant] {xi}

[user]   {peer}

         What is your updated confidence
         that the claim is TRUE? Answer
         with a two-digit percentage
         from 00 to 99.
\end{verbatim}

\paragraph{Control condition.} The final user turn is replaced by:

\begin{verbatim}
Please reconsider your estimate.

What is your updated confidence that
the claim is TRUE? Answer with a 
two-digit percentage from 00 to 99.
\end{verbatim}

We render these four turns into token IDs using each model's specific chat template with \\
\texttt{add\_generation\_prompt=True}. If a model's template rejects a dedicated \texttt{system} role, as Gemma does, we simply fold the system instructions into the first user turn while keeping the rest of the chat identical.

\subsection{Credence readout}
\label{app:readout}

We intentionally avoid text sampling. Sampling introduces random noise that looks identical to genuine variations in the model's update rule. Furthermore, parsing sampled text often fails on malformed outputs, which would silently bias our dataset toward contexts the model happens to answer cleanly. We avoid these issues by computing the exact joint distribution over a two-digit answer and taking its mathematical expectation, as shown in Equation \eqref{eq:readout}. The full procedure is:

\begin{enumerate}[leftmargin=1.4em,itemsep=1pt,topsep=3pt]
\item Render the four turn chat to token ids with the model's chat template and
      \texttt{add\_generation\_prompt=True}.
\item Form ten sequences by appending each digit token \texttt{"0"} through
      \texttt{"9"}. At construction we assert that each of the ten digits encodes
      to exactly one token under the model's tokenizer, which holds for the Qwen
      and Gemma models.
\item Run one batched forward pass over those ten sequences using left padding,
      with explicit position ids computed as
      \texttt{(attention\_mask.cumsum(-1) - 1).clamp(min=0)}. Left padding is
      what keeps the final two positions aligned across contexts of different
      lengths.
\item Read $p(d_1 \mid \text{ctx})$ from the logits at position $-2$. These are
      identical across the ten sequences because those sequences share the
      entire prefix, so one row suffices. Read $p(d_2 \mid \text{ctx}, d_1)$ from
      position $-1$, which gives one row per value of $d_1$.
\item Renormalise both distributions over the ten digit tokens and return
      $\frac{1}{99}\sum_{d_1}\sum_{d_2}(10 d_1 + d_2)\, p(d_1)\, p(d_2 \mid d_1)$.
\end{enumerate}

The steps above are for Qwen and Gemma. Llama's tokenizer stores every number from 00 to 99 as a single token, so for Llama we skip steps 2 to 5. We run one forward pass on the prompt alone, read the probability of each of the 100 tokens \texttt{"00"} to \texttt{"99"} at the last position, renormalise over them, and return $\frac{1}{99}\sum_{k=0}^{99} k\, p(k)$. Mistral's tokenizer puts a space token before a number, so for Mistral each of the ten sequences in step 2 is the prompt, then the space token, then $d_1$, and steps 3 to 5 stay the same. We add the space token only if, at the answer position, the model puts more probability on it than on the ten digits together, which we check once on the first prompt. For both models, the readout tracks the number the model actually writes (correlation $0.90$ for Llama and $0.87$ for Mistral).

We materialize only the final two positions by setting \texttt{logits\_to\_keep=2}. Computing logits for the entire sequence would allocate billions of floating-point values and exceed several gigabytes of memory, which would force us to use an impractically small batch size.

\end{document}